\documentclass[11pt,twoside]{article}
\usepackage{amssymb}
\usepackage{mathtools,extarrows}
\usepackage{amsmath,bm,amsfonts,color,amsthm, amssymb}
\usepackage[margin=3cm]{geometry}
\usepackage{graphicx}
\usepackage[sort&compress]{natbib}
\usepackage{comment}
\usepackage{scrextend}
\usepackage{lipsum}
\bibpunct{(}{)}{;}{a}{}{,}
\usepackage[colorlinks,citecolor=blue,urlcolor=black]{hyperref}
\usepackage{lineno}
\usepackage{eurosym}
\newcommand\numberthis{\addtocounter{equation}{1}\tag{\theequation}}
\usepackage{bbm, dsfont}
\setcitestyle{numbers}
\usepackage{lipsum}
\usepackage{fancyhdr}

\usepackage{tikz}
\usetikzlibrary{decorations.pathmorphing} 
\usetikzlibrary{fit}         
\usetikzlibrary{backgrounds} 
\usetikzlibrary{positioning}
\usetikzlibrary{shadows}
\usepgflibrary{shapes}
\tikzstyle{observed}=[circle, thick, minimum size=0.7cm, draw=black!100, fill=black!20]
\tikzstyle{latent}=[circle, thick, minimum size=0.7cm, draw=black!100]
\tikzstyle{plate}=[rectangle, thick, inner sep=0.3cm, draw=black!100]
\tikzstyle{shadeplate}=[rectangle, thick, inner sep=0.3cm, draw=black!100, fill=black!10]
\numberwithin{equation}{section}
\theoremstyle{plain}

\newtheorem{lemma}{Lemma}[section]

\title{\textbf{Empirical and Constrained Empirical Bayes Variance Estimation Under A One Unit Per Stratum Sample Design}}

\author{Sepideh Mosaferi \textsuperscript{1}}
	\date{
	\textsuperscript{1} University of Maryland College Park; 
	\url{{smosafer}@umd.edu} \\
	\ December 15, 2014}

\newcommand\shorttitle{Empirical and Constrained EB Variance Estimators}
\newcommand\authors{Mosaferi, S.}
	
\fancyhf{}

\fancyhead[RE]{\small\shorttitle\qquad\qquad\thepage}
\fancyhead[LO]{\small\thepage\qquad\qquad\authors}
\pagestyle{fancy}

\begin{document}

\maketitle

\begin{abstract}
\noindent A single ``primary sampling unit (PSU)" per stratum design is a popular design for estimating the parameter of interest. Although, the point estimator of the design is unbiased and efficient, an unbiased variance estimator does not exist. A common practice to solve this is to \textit{collapse} or \textit{combine} the two adjacent strata, but the attained estimator of variance is not design-unbiased, and the bias increases as the population means of the collapsed strata become more variant. Therefore, the \textit{one} PSU per stratum design with collapsed stratum variance estimator might not be a good choice, and some statisticians prefer a design in which \textit{two} PSUs per stratum are selected. In this paper, we first compare a one PSU per stratum design to a two PSUs per stratum design. Then, we propose an empirical Bayes estimator for the variance of one PSU per stratum design, where it over-shrinks towards the prior mean. To protect against this, we investigate the potential of a constrained empirical Bayes estimator. Through a simulation study, we show that the empirical Bayes and constrained empirical Bayes estimators outperform the classical collapsed one in terms of empirical relative mean squared error.
\vspace{0.25cm}

\noindent \textit{Keywords:} Collapsing strata, Constrained empirical Bayes estimator, Empirical Bayes estimator, One PSU per stratum design, Two PSUs per stratum design, Variance estimation.

\end{abstract}
\thispagestyle{empty}

\section{Introduction} \label{intro}

A design in which one primary sampling unit (PSU) is selected in each stratum is theoretically efficient for providing an unbiased estimator of a population parameter. However, estimation of the variability of the attained estimator is impossible without considering any implicit assumptions such as collapsing strata; such assumptions produce a \textit{design-biased} estimator of the variance. Some examples that use stratified multi-stage with one PSU per stratum design include 
the Current Population Survey (CPS) and the National Crime Victimization Survey in the United States.

Due to the lack of an unbiased variance estimator for the one PSU per stratum design, some survey statisticians prefer to select two PSUs per stratum since the variance estimators for simple estimators such as the Horvitz-Thompson estimator are unbiased.
Surveys such as the Survey of Income and Program Participation (SIPP) of the U.S. Census Bureau and the U.S. Department of Agriculture's National Resources Inventory use a multi-stage two-per-stratum design and a stratified two-stage area sampling design, respectively. For the design with two PSUs per stratum, an unbiased variance estimator for the linear estimators exist without any implicit assumptions. Although, the one PSU per stratum design still has its own popularity as it allows \textit{deep stratification}.

The collapsed stratum method for variance estimation was first introduced by [\citep*{hansen1953sample}]. This method usually causes an overestimation in the variance of estimator; therefore, [\citep*{hansen1953sample}] and [\citep*{isaki1983variance}] proposed to use some auxiliary variables well-correlated with the expected values of the mean of stratums to reduce the bias of variance estimator.
[\citep*{hartley1969variance}] proposed a method of grouping strata where each group contains 7 to 15 strata and then applied a linear regression of the group means on one or more auxiliary variables for estimating the variance of one PSU per stratum design where regression residuals used to estimate the components of variance. This method requires a further evaluation study before being used, and the bias of the variance estimator depends on how well the regression model fits. 

The idea of stratum boundaries being chosen by a random process prior to the sample selection was proposed by [\citep*{fuller1970sampling}]. Method [\citep*{fuller1970sampling}] is biased when the stratum boundaries are not randomized beforehand.
[\citep*{rust1987strategies}] examined the effects of collapsing strata in pairs, triples, and larger groups on the quality of the variance estimator and found that a greater level of collapsing is desirable when a small sample of PSUs is selected. [\citep*{rust1987strategies}] provided a list of factors which might help to decide on the extent of collapsing. 

Under the assumption of [\citep*{durbin1967design}] for the sampling scheme within the collapsed strata, [\citep*{shapiro1978better}] applied the method of [\citep*{yates1953selection}] for the variance estimator. This variance can be biased upward, but the bias is relatively smaller than the collapsed method, and the variance estimator is more stable. However, for their empirical example, the authors did not consider the same number of units in the collapsed strata. This results the collapsed stratum variance performs poorly compared to the situation of having the same number of units in each collapsed strata.

[\citep*{mantel2009variance}] proposed a new approach based on the components of variance from different stages of sampling. They studied the Canadian Health Measures Survey (CHMS), a three-stage sample design where the number of PSUs per stratum is very small. In the study, they assumed that a randomized PPS systematic (RPPSS) sampling design used for the CHMS. This assumption might introduce an unknown bias into the variance estimation. They also could not calculate an uncollapsed variance estimate for the Atlantic, the stratum with one PSU.

Recently, [\citep*{breidt2016nonparametric}] proposed a nonparametric alternative method that replaces a collapsed stratum estimator by kernel-weighted stratum neighborhoods and used deviations from a fitted mean function to estimate the variance. They applied their method to the U.S. Consumer Expenditure Survey to demonstrate the superiority of their method over the collapsed stratum variance estimator. The estimator that they used is a natural nonparametric extension of linear models proposed by [\citep*{hartley1969variance}] and [\citep*{isaki1983variance}].

In fact, most of the recommended alternative methods for the collapsed stratum variance are based on the existence of some concomitant or auxiliary information; nevertheless, this kind of desirable auxiliary information might not be readily available for all of the strata. So, finding an acceptable comparative variance estimator for the one PSU per stratum design is still under question.

The rest of manuscript is organized as follows. In section \ref{comparison}, we systematically compare the one PSU per stratum design [Design 1] to the two PSUs per stratum desgin [Design 2] based on the actual variance of the point estimator of population mean through a simulation study. In section \ref{variance}, we analytically compare the two design strategies as well as through a Monte Carlo simulation study with respect to the coverage probability of mean.
In section \ref{EBandCEB}, we propose two alternatives to the collapsed variance estimator for the one PSU per stratum design using an empirical Bayes and a constrained empirical Bayes approaches.
The results are summarized in section \ref{allsimulations}, followed by main remarks in section \ref{remarks}, and we defer all of the necessary proofs to the section \ref{appendix}.

\section{Comparison of Design 1 and Design 2} \label{comparison}

For simplicity of exposition, we consider a stratified design with $H$ strata where stratum $h (h=1,...,H)$ consists of $N_h$ units. A sample of $n_h$ units is selected from each stratum $h$ through a simple random sampling without replacement. In many applications, these units could be the primary stage units. In this paper, we concentrate on single-stage sampling, and we are interested in estimating the finite population mean $\bar{Y}=\sum_{h=1}^{H}W_h \bar{Y}_h$, where $W_h=N_h/N_T$, and $N_T$ is $\sum_{h=1}^{H}N_h$. Note that $\bar{Y}_h$ is the finite population mean of the $h$-th stratum. The Horvitz-Thompson unbiased estimator of the finite population mean ($\bar{Y}$) is given by
$ \bar{y}_{st} =\sum_{h=1}^{H}W_h\bar{y}_h$,  
where $\bar{y}_h$ is the sample mean for the $h$-th stratum. 
The associated randomization-based variance is given by:

\begin{equation}
\label{eq:2.1}
V(\bar{y}_{st})=\sum_{h=1}^{H}W_h^2\frac{1}{n_h}(1-\frac{n_h}{N_h})S_h^2, 
\end{equation}
where $S_h^2=\sum_{j=1}^{N_h}(y_{hj}-\bar{Y}_h)^2/(N_h-1)$ is the finite population variance for the $h$-th stratum.

Here, we compare two popular design options: 
\begin{itemize}
\item[] $n_h=1$: one PSU per stratum design, called [Design 1], and
\item[] $n_h=2$: two PSUs per stratum design, called [Design 2].
\end{itemize}
These two options are widely used in the context of stratified cluster sampling and stratified multi-stage sampling designs. To make a fair comparison of Design 1 and Design 2, we consider $2H$ strata for Design 1 and $H$
groups, each with two strata, for Design 2; therefore, we have an equal number of units (PSUs) for both designs.

The relative efficiency of Design 1 relative to Design 2 can be measured by the design effect deff$=V_2(\bar{y}_{st})/V_1(\bar{y}_{st})$, where $V_1$ and $V_2$ are the randomization-based variances of the same point estimator $\bar{y}_{st}$ mentioned in \ref{eq:2.1}. If deff$>$1, Design 1 is more efficient than Design 2. On the other hand, if deff$<$1, Design 1 is less efficient than Design 2. If deff=1, the two designs are equivalent.

To compare Design 1 with Design 2, we conduct a Monte Carlo simulation study. Following [\citep*{hansen1983evaluation}], we generate a finite population of size $N_T=20,000$ units by drawing a random sample of size 20,000 from a bivariate superpopulation characterized by the two dimensional random vector $(x,y)$, where the variable $x$ has a gamma distribution with shape 2 and scale 5, i.e. $f(x)=.04x \exp(-x/5)$, and the variable $y$ conditional on $x$ (i.e. $y|x$) has a gamma distribution with density function $g(y;x)=(\frac{1}{b^c} \Gamma (c)) y^{c-1} \exp(-y/b)$, where $c=.04 x^{-3/2} (8+5x)^2$ and $b=1.25 x^{3/2}(8+5x)^{-1}$. 

To compare the effects of the number of strata $H$ on the relative efficiency, we consider $H=10, 50,$ and $100$ strata, which are formed based on the quantiles of $x$. We display the results in Table \ref{table1}. Overall, Design 1 performs better than Design 2 with considerable efficiency for small $H$. However, efficiency diminishes as the number of strata increases (see Table \ref{table1}).

\begin{table}[ht]
\caption{Comparison of Design 1 and Design 2 Based on the Number of Strata} \label{table1}
\centering
\begin{tabular}{@{}ccccccccc@{}}
\hline\hline
 \multicolumn{3}{c}{Design 1} & & \multicolumn{3}{c}{Design 2} & & Comparison\\
\cline{1-3} \cline{5-7} 
 $H$ & $N_h$ & $V_1(\bar{y}_{st})$ & & $H$ & $N_h$ & $V_2(\bar{y}_{st})$ & & deff=$V_2(\bar{y}_{st})/V_1(\bar{y}_{st})$\\\hline
10 & 2000 & 0.2515 &  & 5 & 4000 & 0.2759 & & 1.0969 \\
50 & 400 & 0.0464 & & 25 & 800 & 0.0469  & & 1.0104 \\
100 & 200 & 0.0229  & & 50 & 400 & 0.0232 & & 1.0109 \\
\hline
\end{tabular}
\end{table}

To be able to assess the effects of differences in population means and population variances within the collapsed strata on the randomization-based variance \ref{eq:2.1}, 
we considered $H=10$ strata and $5$ strata for Designs 1 and 2, respectively, and employed some changes to the generated population's means and variances within and between groups (collapsed strata).
We separately generated data for each stratum from the Normal distribution, $N(\mu=mean(y),\sigma^2=var(y))$ and considered different coefficients of k, 2k, 3k, 4k, 5k (k = 1, 2) to implant some changes in $\mu$ and $\sigma^2$ for groups $(g)$ 1, 2, 3, 4, and 5, respectively. 
The populations used for generating the data based on the Normal distribution are given in Table \ref{table2}, and the results of comparisons associated to the groups of Table \ref{table2} are in Table \ref{table3}.
\setlength{\tabcolsep}{3pt}

\begin{table}[ht]
\caption{Generated Populations from the Normal Distribution for the Comparison} \label{table2}
\small
\centering
\begin{tabular}{cccccc}
\hline\hline
Case Study & Group 1 & Group 2 & Group 3 & Group 4 & Group 5 \\\hline
\\
$\bar{Y}_{g1}\approx\bar{Y}_{g2}$ & $N(\mu+1,\sigma^2)$ & $N(\mu+2,2\sigma^2)$ & $N(\mu+3,3\sigma^2)$
& $N(\mu+4,4\sigma^2)$ & $N(\mu+5,5\sigma^2)$\\
$s^2_{g1}\approx s^2_{g2}$ & $N(\mu+1,\sigma^2)$ & $N(\mu+2,2\sigma^2)$ & $N(\mu+3,3\sigma^2)$ &
$N(\mu+4,4\sigma^2)$ & $N(\mu+5,5\sigma^2)$ \\
&&&&&\\
$\bar{Y}_{g1}\neq\bar{Y}_{g2}$ & $N(\mu+1,\sigma^2)$ & $N(\mu+2,2\sigma^2)$ & $N(\mu+3,3\sigma^2)$
& $N(\mu+4,4\sigma^2)$ & $N(\mu+5,5\sigma^2)$\\
$s^2_{g1}\approx s^2_{g2}$ & $N(\mu+2,\sigma^2)$ & $N(\mu+4,2\sigma^2)$ & $N(\mu+6,3\sigma^2)$ &
$N(\mu+8,4\sigma^2)$ & $N(\mu+10,5\sigma^2)$ \\
&&&&& \\
$\bar{Y}_{g1}\neq\bar{Y}_{g2}$ & $N(\mu+1,\sigma^2)$ & $N(\mu+2,2\sigma^2)$ & $N(\mu+3,3\sigma^2)$
& $N(\mu+4,4\sigma^2)$ & $N(\mu+5,5\sigma^2)$\\
$s^2_{g1}\neq s^2_{g2}$ & $N(\mu+2,2\sigma^2)$ & $N(\mu+4,4\sigma^2)$ & $N(\mu+6,6\sigma^2)$ &
$N(\mu+8,8\sigma^2)$ & $N(\mu+10,10\sigma^2)$ \\
&&&&& \\
$\bar{Y}_{g1}\approx\bar{Y}_{g2}$ & $N(\mu+1,\sigma^2)$ & $N(\mu+2,2\sigma^2)$ & $N(\mu+3,3\sigma^2)$
& $N(\mu+4,4\sigma^2)$ & $N(\mu+5,5\sigma^2)$\\
$s^2_{g1}\neq s^2_{g2}$ & $N(\mu+1,2\sigma^2)$ & $N(\mu+2,4\sigma^2)$ & $N(\mu+3,6\sigma^2)$ &
$N(\mu+4,8\sigma^2)$ & $N(\mu+5,10\sigma^2)$ \\ \\
\hline
\end{tabular}
\end{table}

According to Table \ref{table3}, when population means within groups are different, Design 1 is more efficient than Design 2; on the other hand, when the population means within groups are similar, there is no preference between the two designs. 
In addition, changes in the population variances within the groups do not show any conspicuous effects on the efficiency.

\begin{table}[ht]
\caption{Comparison of Design 1 and Design 2 Based on the Differences of Means and Variances} \label{table3}
\centering
\begin{tabular}{ccccccccccc}
\hline\hline
  & & \multicolumn{3}{c}{Design 1} & & \multicolumn{3}{c}{Design 2} & & Comparison\\
\cline{3-5} \cline{7-9} 
Case Study & & $H$ & $N_h$ & $V_1(\bar{y}_{st})$ & & $H$ & $N_h$ & $V_2(\bar{y}_{st})$ & & deff=$V_2(\bar{y}_{st})/V_1(\bar{y}_{st})$\\\hline \\

$\bar{Y}_{g1}\approx\bar{Y}_{g2}$ and $S_{g1}^2 \approx S_{g2}^2$ &&10 & 2000 & 1.6198  & & 5 & 4000 & 1.6198  & & 1.0000 \\ \\

$\bar{Y}_{g1}\neq\bar{Y}_{g2}$ and $S_{g1}^2 \approx S_{g2}^2$ &&10 & 2000 & 1.6197  & & 5 & 4000 & 1.9021 & & 1.1743 \\ \\

$\bar{Y}_{g1}\neq\bar{Y}_{g2}$ and $S_{g1}^2 \neq S_{g2}^2$ &&10 & 2000 & 2.4310  & & 5 & 4000 & 2.7047 & & 1.1126 \\ \\

$\bar{Y}_{g1} \approx\bar{Y}_{g2}$ and $S_{g1}^2 \neq S_{g2}^2$ &&10 & 2000 & 2.4476  & & 5 & 4000 & 2.4476 & & 1.0000 \\ \\
\hline
\end{tabular}
\end{table}

\section{Variance Estimation in Design 1 and Design 2} \label{variance}

\subsection{Theoretical Expressions} \label{theory}

As in section \ref{comparison}, we assume that we have two strata in each of the $h$-th group and let $N_{gi}$ denote the population size for the $i$-th stratum within the $g$-th group. Let $y_{gij}$ denote the value of the characteristic of interest for the $j$ unit in the $i$ stratum within the $g$ group ($g=1,\cdots,H,\;i=1,2,\;j=1,\cdots,N_{gi}$). For simplicity in exposition, we assume $N_{gi}=N$ ($\forall g=1,\cdots,H,\;i=1,2$), therefore $N_T=2HN$ and $W_{gi}=N_{gi}/N_T=1/2H$.  
We further define:
\begin{enumerate}
\item[] $\bar Y_{gi}=N^{-1}\sum_{j=1}^Ny_{gij}:$ finite population mean for the $i$-th stratum within the $g$-th group,
\item[] $S_{gi}^2=(N-1)^{-1}\sum_{j=1}^N(y_{gij}-\bar Y_{gi})^2:$ finite population variance for the $i$-th stratum in the $g$-th group,
\item[] $\mu_{r,gi}=(N-1)^{-1}\sum_{j=1}^N(y_{gij}-\bar Y_{gi})^r:$ finite population $r$-th central moment ($r\ge 1$). Note that $\mu_{1,gi}=0$ and $\mu_{2,gi}=S_{gi}^2.$
\end{enumerate}
We also assume the finite population correction (FPC) factor is negligible as the sample size is only one or two per stratum, and $N$ is large.

The true variance based upon Design 1 is:
\begin{equation} \label{eq:3.1}
V(\bar{y}_{st})=\frac{1}{4H^2} \sum_{h=1}^{2H}S_h^2.
\end{equation}
where $S^2_h=\sum_{j=1}^{N}(y_{hj}-\bar{Y}_h)^2/(N-1)$. We rewrite \ref{eq:3.1} as
$
V(\bar{y}_{st})=\frac{1}{4H^2}\sum_{g=1}^{H}(S^2_{g1}+S^2_{g2}).
$
and the collapsed strata variance estimator is given by:
 \begin{equation} \label{eq:3.2}
v(\bar{y}_{st})=\frac{1}{2H^2}\sum_{g=1}^{H}s_g^2, 
 \end{equation}
 where $s^2_g=\sum_{i=1}^{2}(y_{gi}-\bar{y}_g)^2$, and $\bar{y}_g=(y_{g1}+y_{g2})/2$.
 The method relies on the implicit assumption of $\bar Y_{g1}=\bar Y_{g2}=\bar Y_{g}$.
 
Estimator \ref{eq:3.2} is design-biased, and its bias with respect to Design 1 is given by:
\begin{equation} \label{eq:3.3}
Bias(v(\bar{y}_{st}))=\frac{1}{4H^2}\sum_{g=1}^{H}(\bar{Y}_{g1}-\bar{Y}_{g2})^2.
\end{equation}
As it is clear from \ref{eq:3.3}, the bias is not related to the population variances within the groups.
 [\citep*{wolter1985introduction}] computed the bias of population total given the original sampling design. 
Equation \ref{eq:3.3} suggests the strategy of how we can group strata to reduce the bias of collapsed stratum variance by putting more similar strata in pairs with respect to the characteristic of interest to minimize the difference 
$|\bar{Y}_{g1}-\bar{Y}_{g2}|$.

In order to find out the mean squared error (MSE) of $v(\bar{y}_{st})$, the theoretical variance of variance is needed. By ignoring the FPC, the variance is:

\begin{align*} \label{eq:3.4}
Var(v(\bar{y}_{st})) & =\frac{1}{16H^4}\sum_{g=1}^{H}\{\mu_{4,g1}
+\mu_{4,g2}+2S^2_{g1}S^2_{g2}+4(\bar{Y}_{g1}-\bar{Y}_{g2})^2 (S^2_{g1}+S^2_{g2}) \\
& \quad -(S^2_{g1}-S^2_{g2})^2
+4(\bar{Y}_{g1}-\bar{Y}_{g2})(\mu_{3,g1}-\mu_{3,g2})\}. \numberthis
\end{align*}
If $\mu_{4,g1}=\mu_{4,g2}=\mu_{4,g}$, $\mu_{3,g1}=\mu_{3,g2}=\mu_{3,g}$, and $S^2_{g1}=S^2_{g2}=S^2_g$, then
\begin{equation*}
Var(v(\bar{y}_{st}))=\frac{1}{8H^4}\sum_{g=1}^{H}\{
\mu_{4,g}+(S^2_g)^2+4S^2_g(\bar{Y}_{g1}-\bar{Y}_{g2})^2\}. 
\end{equation*}
Therefore, the \textit{MSE} of $v(\bar{y}_{st})$ under all of these equality assumptions is:
\begin{align*}
 MSE(v(\bar{y}_{st})) & =Var(v(\bar{y}_{st}))+\{Bias(v(\bar{y}_{st}))\}^2 \\
 &  =\frac{1}{8H^4}\sum_{g=1}^{H}\{
\mu_{4,g}+(S^2_g)^2+4S^2_g(\bar{Y}_{g1}-\bar{Y}_{g2})^2\}
 +\frac{1}{16H^4}\sum_{g=1}^{H}(\bar{Y}_{g1}-\bar{Y}_{g2})^4. 
\end{align*}

Observe the MSE is inversely related to the number of strata $H$. As a result,  according to the asymptotic properties we can expect as the number of strata $H$ increases, MSE decreases. 
For Design 2, we ignore the FPC and use the standard variance of stratified estimator, which is unbiased under the design (see, [\citep*{hansen1953sample}] as an example).

\subsection{Simulation Study} \label{simulation}

We performed a simulation experiment to investigate the differences between the two designs with respect to the empirical coverage probability (CP) and average length (AL) of a nominal $95\%$ confidence interval (CI) for $\bar{y}_{st}$ under the two designs. 
The population used for this sub-section is similar to the one used in Table \ref{table3}.
The sample designs are the random selection of 1 PSU and two PSUs without replacement in each stratum for Design 1 and Design 2, respectively.

 The process of sample selection was repeated 10,000 times, and for each replication, we obtain $\bar{y}_{st}$ and the two-sided $95\%$ confidence interval, $\bar{y}_{st} \pm 1.96 \sqrt{v(\bar{y}_{st})}$.
For the variance, we use the standard unbiased variance estimate for the two PSUs per statum design and the collapsed strata variance given by \ref{eq:3.2} for the one PSU per stratum design. Table \ref{table4} displays the empirical CP and average CI for the both designs.

From the given results, we observe when the means of collapsed strata are similar, the two designs perform almost identical (see Table \ref{table4}). On the other hand, the AL and CP are greater when the population means of collapsed strata becomes more different; this can reflect the important effect of $|\bar{Y}_{g1}-\bar{Y}_{g2}|$
in collapsing. Based on Table \ref{table4}, we cannot say which design is better when $|\bar{Y}_{g1}-\bar{Y}_{g2}|\neq0$, since the more variability in Design 1 might result into the greater CP. In addition, as the number of strata increases the CP for the both designs approaches the nominal coverage probability 0.95; however, the results are not displayed in Table \ref{table4} to save space and to be consistent with other tables.

\begin{table}[ht]
\caption{Empirical Results of Simulation Study for Comparison of Design 1 and Design 2} \label{table4}
\centering
\begin{tabular}{ccccc}
\hline\hline
 &  \multicolumn{2}{c}{Design 1} & \multicolumn{2}{c}{Design 2}\\
\cline{2-3} \cline{4-5} 
Case Study &  AL & CP$\%$  & AL & CP$\%$ \\\hline\\

$\bar{Y}_{g1}\approx\bar{Y}_{g2}$ and $S_{g1}^2 \approx S_{g2}^2$ &  4.7309 & 88.81 & 4.7168 & 88.48  \\ \\
$\bar{Y}_{g1}\neq\bar{Y}_{g2}$ and $S_{g1}^2 \approx S_{g2}^2$ &  5.4929 & 92.35 & 5.0995 & 87.82 \\ \\
$\bar{Y}_{g1}\neq\bar{Y}_{g2}$  and $S_{g1}^2 \neq S_{g2}^2$ &  6.4273 & 90.90 & 6.0775 & 88.88 \\ \\
$\bar{Y}_{g1} \approx\bar{Y}_{g2}$ and $S_{g1}^2 \neq S_{g2}^2$ &  5.7853 & 89.32 & 5.7591 & 88.17  \\ \\
\hline
\end{tabular}
\end{table}

\section{E.B. and C.E.B. estimators for the Variance of Design 1}
\label{EBandCEB}

Let $s_g^2=(y_{g1}-y_{g2})^2/2$ denote the collapsed strata variance for the $g$-th group of strata. Here, $y_{gi}$ denotes the sampled observation from the $i$-th stratum in the $g$-th group. We assume $s_g^{2}/S_g^2$ has a chi-squared distribution with 1 degree of freedom $(\chi^2(1))$ and an inverse gamma prior Inv-Gamma(a,a) for $S_g^2$.
The posterior distribution $\pi(S^2_g|
s^2_g)$ is
\begin{equation*}
\pi(S^2_g|s^2_g) \propto f_{S^2_g}(s^2_g)\pi(S^2_g) \propto
\frac{(s^2_g)^{-1/2}e^{-s^2_g/2S^2_g}}{(S^2_g)^{1/2}}(S^2_g)^{-a-1}e^{-a/S^2_g}. 
\end{equation*}
This is an inverse gamma distribution with shape $a+\frac{1}{2}$ and scale $a+\frac{s^2_g}{2}$, i.e. Inv-Gamma$(a+\frac{1}{2},a+\frac{s^2_g}{2}).$ 

Under the squared error loss function, $L(S^2_g,\hat{\delta}^{B}_g)\equiv(S^2_g-
\hat{\delta}^{B}_g)^2$, the optimal Bayes estimator of $S^2_g$ is the expectation of $S^2_g$ conditional on $s^2_g$, i.e. $[E(S^2_g|s^2_g)]$, which is,
\begin{equation*}
\hat{\delta}^{B}_g=\lambda s^2_g+(1-\lambda)\frac{a}{a-1}=\frac{2a+s^2_g}{2a-1},
\end{equation*}
where $\lambda$ equals to $(2a-1)^{-1}$. 

We estimate parameter $``a"$ based on the method-of-moments, and therefore the marginal distribution of $s^2_g$ is needed. The marginal distribution of $s^2_g$ is the $F$-distribution with $1$ and $2a$ degrees of freedom or equivalently, $\sqrt{s^2_g}$ follows the Student's $t$-distribution with $2a$ degrees of freedom. The theoretical second order moment based on the Student's $t$-distribution is $E((\sqrt{s^2_g})^2)\equiv a/(a-1)$, which is valid for $a$ greater than $1$, and should be replaced by the empirical mean of the collapsed strata variances, $s^2_.=\sum_{g=1}^{H}s^2_g/H$. Therefore, the solution is $\hat{a}_{MM}=s^2_./(s^2_.-1),$
which yields an empirical Bayes estimator:
\begin{equation} \label{eq:4.1}
\hat{\delta}^{EB}_g=\frac{2\hat{a}_{MM}+s^2_g}{2\hat{a}_{MM}-1}.
\end{equation}
By substituting $\hat{\delta}^{EB}_g$ from expression \ref{eq:4.1} into \ref{eq:3.2}, the optimal estimator for the variance of one PSU per stratum design is attained:
\begin{equation} \label{eq:4.2}
\tilde{v}(\bar{y}_{st})=\frac{1}{2H^2}
\sum_{g=1}^{H}\hat{\delta}^{EB}_g. 
\end{equation}

In Design1 since the sample size from each stratum is small, the direct estimator $s^2_g$ is over-dispersed; therefore, under our proposed Bayesian model, we can show that:
\begin{equation*}
E\Big\{\frac{1}{H-1}\sum_{g=1}^{H}(s_g^2-s^2_.)^2\Big\} =\frac{a^2(2a-1)}{(a-1)^2(a-2)}>\frac{a^2}{(a-1)^2(a-2)}=
E\Big\{\frac{1}{H-1}\sum_{g=1}^{H}(S^2_g-S^2_.)^2\Big\},
\end{equation*}
where $S^2_.=\sum_{g=1}^{H}S^2_g/H$.
While the direct estimator $s^2_g$ shows over-dispersion, the Bayes estimator shows under-dispersion explained as follows.
\begin{lemma}
The MSE of Bayes estimator is smaller than the direct estimator, i.e.  \newline $E\{\sum_{g=1}^{H}(S^2_g-S^2_.)^2\}>
E\{\sum_{g=1}^{H}(\hat{\delta}_g^B-\hat{\delta}^B_.)^2\}$. 
\end{lemma}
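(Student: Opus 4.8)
The plan is to exploit the fact that the Bayes estimator is affine in $s_g^2$. Writing $\hat{\delta}_g^B=\lambda s_g^2+(1-\lambda)\tfrac{a}{a-1}$ with $\lambda=(2a-1)^{-1}$, the additive term $(1-\lambda)\tfrac{a}{a-1}$ is identical across groups and therefore cancels when we subtract the group average $\hat{\delta}^B_.=\lambda s^2_.+(1-\lambda)\tfrac{a}{a-1}$. This yields the clean identity $\hat{\delta}_g^B-\hat{\delta}^B_.=\lambda(s_g^2-s^2_.)$, hence $\sum_{g=1}^{H}(\hat{\delta}_g^B-\hat{\delta}^B_.)^2=\lambda^2\sum_{g=1}^{H}(s_g^2-s^2_.)^2$. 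So the whole question is how the shrinkage factor $\lambda^2$ interacts with the two variance-of-variance expressions already recorded just above the lemma.

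Taking expectations and invoking those two identities, namely $E\{(H-1)^{-1}\sum_g(s_g^2-s^2_.)^2\}=a^2(2a-1)/[(a-1)^2(a-2)]$ and $E\{(H-1)^{-1}\sum_g(S_g^2-S^2_.)^2\}=a^2/[(a-1)^2(a-2)]$, the comparison collapses to a single scalar inequality: it suffices to show $\lambda^2(2a-1)<1$. Since $\lambda=(2a-1)^{-1}$, the left-hand side equals $(2a-1)^{-1}=\lambda$, and $\lambda<1$ is equivalent to $2a-1>1$, i.e.\ $a>1$, which holds automatically (indeed $a>2$ is already required for the fourth-moment identity to be finite and positive). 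Multiplying through by the positive quantity $(H-1)a^2/[(a-1)^2(a-2)]$ then delivers $E\{\sum_g(S_g^2-S^2_.)^2\}>E\{\sum_g(\hat{\delta}_g^B-\hat{\delta}^B_.)^2\}$, which is exactly the stated inequality (and quantifies the under-dispersion of the Bayes estimator relative to the direct estimator).

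There is essentially no hard step here; the only point that needs care is bookkeeping of which expectation is taken under which model — the identity for $\sum_g(S_g^2-S^2_.)^2$ is evaluated under the Inv-Gamma$(a,a)$ prior with the $S_g^2$ treated as i.i.d., whereas the identity for $\sum_g(s_g^2-s^2_.)^2$ is under the joint model ($s_g^2/S_g^2\sim\chi^2(1)$ given $S_g^2$, then marginalized over the prior). Once both are read off consistently from the display preceding the lemma, the conclusion is immediate. If a self-contained argument is preferred, one can instead obtain $\lambda<1$ directly from the shrinkage representation $\hat{\delta}_g^B=\lambda s_g^2+(1-\lambda)E(S_g^2)$ together with $a>1$, avoiding any appeal to the $(a-2)$ term.
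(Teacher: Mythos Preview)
Your argument is correct, but it follows a different route from the paper. The paper does not use the affine form of $\hat\delta_g^B$ or the explicit moment formulas at all; instead it conditions on the data and applies the elementary identity $E[(S_g^2-S_.^2)^2\mid s_g^2]=\mathrm{Var}(S_g^2-S_.^2\mid s_g^2)+(E[S_g^2-S_.^2\mid s_g^2])^2$, recognizes $E[S_g^2\mid s_g^2]=\hat\delta_g^B$, and then drops the nonnegative posterior variance term before taking the outer expectation. This is a Rao--Blackwell/Jensen-type argument that works for \emph{any} prior and likelihood, not only the Inv-Gamma/$\chi^2(1)$ pair, and in fact yields the stronger pointwise inequality $\sum_g(\hat\delta_g^B-\hat\delta^B_.)^2\le E\{\sum_g(S_g^2-S_.^2)^2\mid s_1^2,\ldots,s_H^2\}$ for every realization of the data. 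Your approach, by contrast, leverages the specific affine shrinkage $\hat\delta_g^B-\hat\delta^B_.=\lambda(s_g^2-s^2_.)$ together with the two displayed moment identities; the payoff is that you obtain the exact ratio $\lambda=(2a-1)^{-1}$ between the two expected sums of squares, which quantifies the under-dispersion rather than merely asserting it. Both proofs are valid; the paper's is more general and requires no moment computations, while yours is sharper but tied to this particular model and needs $a>2$ for the moment formulas to be finite.
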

\begin{proof}
\begin{align*}
E\Big\{\frac{1}{H-1}\sum_{g=1}^{H}(S^2_g-S^2_.)^2|s^2_g\Big\} & =
\frac{1}{H-1}\sum_{g=1}^{H}E\Big\{(S^2_g-S^2_.)^2|s^2_g\Big\} \\
& =\frac{1}{H-1}\sum_{g=1}^{H}\Big\{V[S^2_g-S^2_.|s^2_g]
+(E[S^2_g-S^2_.|s^2_g])^2\Big\}\\
& =\frac{1}{H-1}\sum_{g=1}^{H}\Big\{V[S^2_g-S^2_.|s^2_g]
+(\hat{\delta}^B_g-\hat{\delta}^B_.)^2\Big\} \\
& =\frac{1}{H-1}\sum_{g=1}^{H}V[S^2_g-S^2_.|s^2_g]+\frac{1}{H-1}\sum_{g=1}^{H}(\hat{\delta}^B_g-\hat{\delta}^B_.)^2 \\
& >\frac{1}{H-1}\sum_{g=1}^{H}(\hat{\delta}_g^B-\hat{\delta}^B_.)^2,
\end{align*}
where $\hat{\delta}^B_.=\sum_{g=1}^{H}\hat{\delta}^B_g/H$. Hence, $E\{\sum_{g=1}^{H}(S^2_g-S^2_.)^2\}>
E\{\sum_{g=1}^{H}(\hat{\delta}_g^B-\hat{\delta}^B_.)^2\}$.
\end{proof}
These inequalities hold when we use empirical Bayes estimator as well (see [\citep*{lahiri1990adjusted}] as an example). 
The problem of under-dispersion for these set of Bayes estimators is related to the fact that the standard Bayes estimator and empirical Bayes estimator shrink towards the prior mean $a/(a-1)$ specifically when the sample size is small.
A solution to this problem might be to attach more weight to the direct estimator. To do so, we can match the ensemble variances by minimizing the posterior expected squared error loss $E\{\sum_{g=1}^{H}(S^2_g-\sigma^2_g)^2|s^2_g\}$ subject to the following constraints:

\begin{enumerate}
\item[i)] 
$\sigma^2_.=\frac{1}{H}\sum_{g=1}^{H}\sigma^2_g=
\frac{1}{H}\sum_{g=1}^{H}\hat{\delta}^B_g=\hat{\delta}^B_.,$

\item[ii)] 
$\frac{1}{H-1}\sum_{g=1}^{H}(\sigma^2_g-\sigma^2_.)^2
=E\{\frac{1}{H-1}\sum_{g=1}^{H}(S^2_g-S^2_.)^2|s^2_g\}.$
\end{enumerate}

We therefore can write the posterior expected squared error loss $E\{\sum_{g=1}^{H}(S^2_g-\sigma^2_g)^2|s^2_g\}$ as:

\begin{equation} \label{eq:4.3}
E\Big\{\sum_{g=1}^{H}(S^2_g-\sigma^2_g)^2|s^2_g\Big\}=E\Big\{\sum_{g=1}^{H}(S^2_g-\hat{\delta}^B_g)^2\Big\}+
\sum_{g=1}^{H}(\hat{\delta}^B_g-\sigma^2_g)^2.
\end{equation}
In order to minimize the posterior expected squared error loss, it is sufficient to minimize the last term of \ref{eq:4.3}, which is the only term related to $\sigma^2_g$.
Using Lagrange multipliers $\lambda_1$ and $\lambda_2$, we minimize $\sum_{g=1}^{H}(\hat{\delta}^B_g-\sigma^2_g)^2$ subject to the constraints $\sum_{g=1}^{H}\sigma^2_g=C_1$ and $\sum_{g=1}^{H}(\sigma^2_g-\sigma^2_.)^2=C_2$ or $\sum_{g=1}^{H}\sigma^4_g=C_2+C^2_1/H$, which means we minimize the objective function
\begin{equation*}
\Phi=\sum_{g=1}^{H}(\hat{\delta}_g^B-\sigma^2_g)^2-\lambda_1(\sum_{g=1}^{H}
\sigma^2_g-C_1)-\lambda_2(\sum_{g=1}^{H}\sigma^4_g-C_2
-\frac{C_1^2}{H}),
\end{equation*}
with respect to the $\sigma^2_g$'s. Therefore, we can get
\begin{equation} \label{eq:4.4}
\sigma^2_{g,opt}=\frac{1}{1-\lambda_2}(\hat{\delta}^B_g+\frac{\lambda_1}{2}).
\end{equation}

By imposing the constraints on \ref{eq:4.4}, we obtain 
\begin{equation*}
\lambda_1=2\Big\{(1-\lambda_2)\frac{C_1}{H}-\hat{\delta}^B_.\Big\}, \quad\quad\
\lambda_2=1-\Big\{\sum_{g=1}^{H}(\hat{\delta}^B_g-\hat{\delta}^B_.)^2/C_2\Big\}^{1/2}.
\end{equation*}
We can rewrite $C_1$ and $C_2$ as follows:
\begin{equation*}
C_1=H\hat{\delta}^B_., 
\quad\
C_2=E\Big\{\sum_{g=1}^{H}(S^2_g-S^2_.)^2|s^2_g\Big\}.
\end{equation*}
Now by substituting $\lambda_1$ and $\lambda_2$ into \ref{eq:4.4}, we get the constrained Bayes estimator:
\begin{equation*}
\hat{\delta}^{CB}_{g}=\sigma^2_{g,opt}=\hat{\delta}^{B}_.+
\Bigg\{\frac{E[\sum_{g=1}^{H}(S^2_g-S^2_.)^2|s^2_g]}{\sum_{g=1}^{H}(\hat{\delta}^B_g-\hat{\delta}^B_.)^2}\Bigg\}^{1/2}
(\hat{\delta}^B_g-\hat{\delta}^B_.),
\end{equation*}
and using some algebra, we have
\begin{equation} \label{eq:4.5}
\hat{\delta}^{CB}_{g}=\hat{\delta}^{B}_{.}+\Bigg\{1+
\frac{\frac{1}{H}\sum_{g=1}^{H}V(S^2_g|s^2_g)}
{\frac{1}{H-1}\sum_{g=1}^{H}
(\hat{\delta}^B_g-\hat{\delta}^B_.)^2}\Bigg\}^{1/2}(\hat{\delta}^B_g-\hat{\delta}^B_.).
\end{equation}

In expression \ref{eq:4.5}, $V(S^2_g|s^2_g)$ is the posterior variance for the variance of Inv-Gamma$(a+\frac{1}{2},a+\frac{s^2_g}{2})$, which is $2(2a+s^2_g)^2/(2a-1)^2(2a-3)$.
Therefore, after some algebra, $\hat{\delta}^{CB}_g$ in \ref{eq:4.5} can be written as
\begin{equation} \label{eq:4.6}
\hat{\delta}^{CB}_g=\frac{1}{2a-1}\Bigg\{2a+s^2_.+(s^2_g-s^2_.)
\Bigg[1+\frac{8(H-1)(a^2+s^2_.a+\frac{1}{4H}
\sum_{g=1}^{H}(s^2_g)^2)}{(2a-3)
\sum_{g=1}^{H}(s^2_g-s^2_.)^2}\Bigg]^{1/2}\Bigg\}.
\end{equation}
Since the constrained empirical Bayes estimator is close to the constrained Bayes estimator and the empirical Bayes estimator is close to the Bayes estimator, all of the mentioned results can be applied to the empirical Bayes and constrained empirical Bayes estimators as well (for more details on the equivalency between Bayes estimator (or constrained Bayes estimator) and empirical Bayes estimator (or constrained empirical Bayes estimator), readers can consult [\citep*{ghosh1987robust}] and [\citep*{lahiri1990adjusted}]).

As a consequence, the constrained empirical Bayes $\hat{\delta}^{CEB}_g$ can be obtained by substituting $\hat{a}_{MM}$ into \ref{eq:4.6}, which gives:
\begin{align*} \label{eq:4.7}
\hat{\delta}^{CEB}_g & =\frac{1}{2\hat{a}_{MM}-1}\Bigg\{2\hat{a}_{MM}+s^2_.+(s^2_g-s^2_.) \\
& \quad \times \Bigg[1+\frac{8(H-1)(\hat{a}_{MM}^2+s^2_.\hat{a}_{MM}+\frac{1}{4H}
\sum_{g=1}^{H}(s^2_g)^2)}{(2\hat{a}_{MM}-3)
\sum_{g=1}^{H}(s^2_g-s^2_.)^2}\Bigg]^{1/2}\Bigg\}. \numberthis
\end{align*}
As there is a posibility of receiving negative values for \ref{eq:4.7}, which could be related to $(s^2_g-s^2_.)$, we therefore use $\hat{\delta}_g^{EB}$ for the negative situations.
Finally, the optimal estimator for the variance of one PSU per stratum design based on the constrained empirical Bayes is:
\begin{equation} \label{eq:4.8}
\tilde{v}_2(\bar{y}_{st})=\frac{1}{2H^2}\sum_{g=1}^{H}
\hat{\delta}^{CEB}_{g}.
\end{equation}

\section{Simulation Study and Results} \label{allsimulations}

For the sake of suitable comparisons among our candidate estimators \ref{eq:3.2}, \ref{eq:4.2}, and \ref{eq:4.8}, we conduct a Monte Carlo simulation study with 10,000 replications based on our proposed population in Table \ref{table3}. The empirical relative mean squared error (RMSE) was found using the following formula:
\begin{equation*}
\sum_{r=1}^{10,000}\frac{RMSE_r}{10,000};
\quad\
RMSE_r=\frac{\sqrt{(v^*(\bar{y}_{st,r})-V(\bar{Y}_{st}))^2}}{V(\bar{Y}_{st})}, 
\end{equation*}
where $v^*(\bar{y}_{st,r})$ takes the values of our candidate estimators \ref{eq:3.2}, \ref{eq:4.2}, and \ref{eq:4.8}, and $V(\bar{Y}_{st})$ is the randomization-based variance in \ref{eq:2.1}.

As we have different constraints for parameter $a$, theoretically and applicably, to make our estimators \ref{eq:4.1} and \ref{eq:4.7} become valid, we apply the intersection of constraints. So, the truncated $\hat{a}^{*}_{MM}$=$max(x,\hat{a}_{MM})$ is used, where $x=1.5+e$, and $e$ takes values greater than zero. We consider different values for $e$ to appropriately study the behaviors and/or effects of $\hat{a}^{*}_{MM}$, which is related to the lower bound $(x)$, on estimators \ref{eq:4.1}, \ref{eq:4.7}, and RMSEs as well. The results of comparisons are shown in Figure \ref{figure1}. 

According to the plots in Figure \ref{figure1}, the RMSEs of constrained empirical Bayes for all of the situations are greater than their competitors when $x$ is really close to $1.5$, since the denominator of \ref{eq:4.7} tends to zero for $\hat{\delta}^{CEB}_g$'s with $x$ greater than $\hat{a}_{MM}$, and since the quantity of $\hat{\delta}^{CEB}_g$'s with $x>\hat{a}_{MM}$ is reasonable; this can tremendously affect the results of RMSEs. 

After moving away from this crucial threshold $(1.5)$, the RMSE of constrained empirical Bayes decreases compared to the RMSE of empirical Bayes. However, by increasing the value of $x$ and assigning more weight $(2a-2)/(2a-1)$ to the prior mean, the constrained empirical Bayes cannot perform well. Imposing great values for the constrain of $a$ result in the under-dispersion of empirical and constrained empirical Bayes estimators; therefore, their RMSEs will be increased. In addition, when the means of collapsed strata within the groups are different, empirical Bayes and constrained empirical Bayes estimators outperform the classical collapsed stratum variance estimator.

\begin{figure}[h!]
\centering
\begin{tabular}{c}
\includegraphics[width=0.78\textwidth]{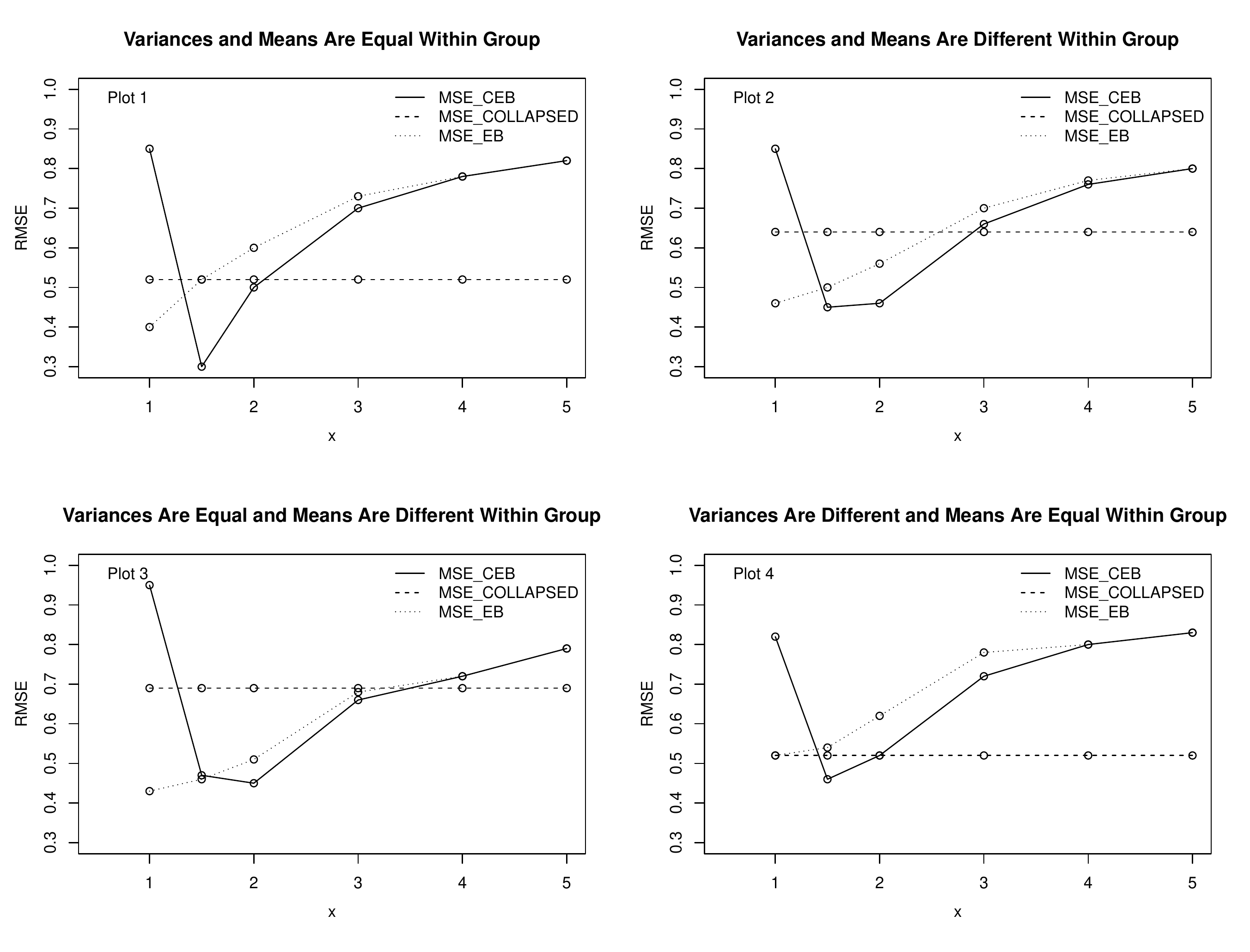} \\
\includegraphics[width=0.78\textwidth]{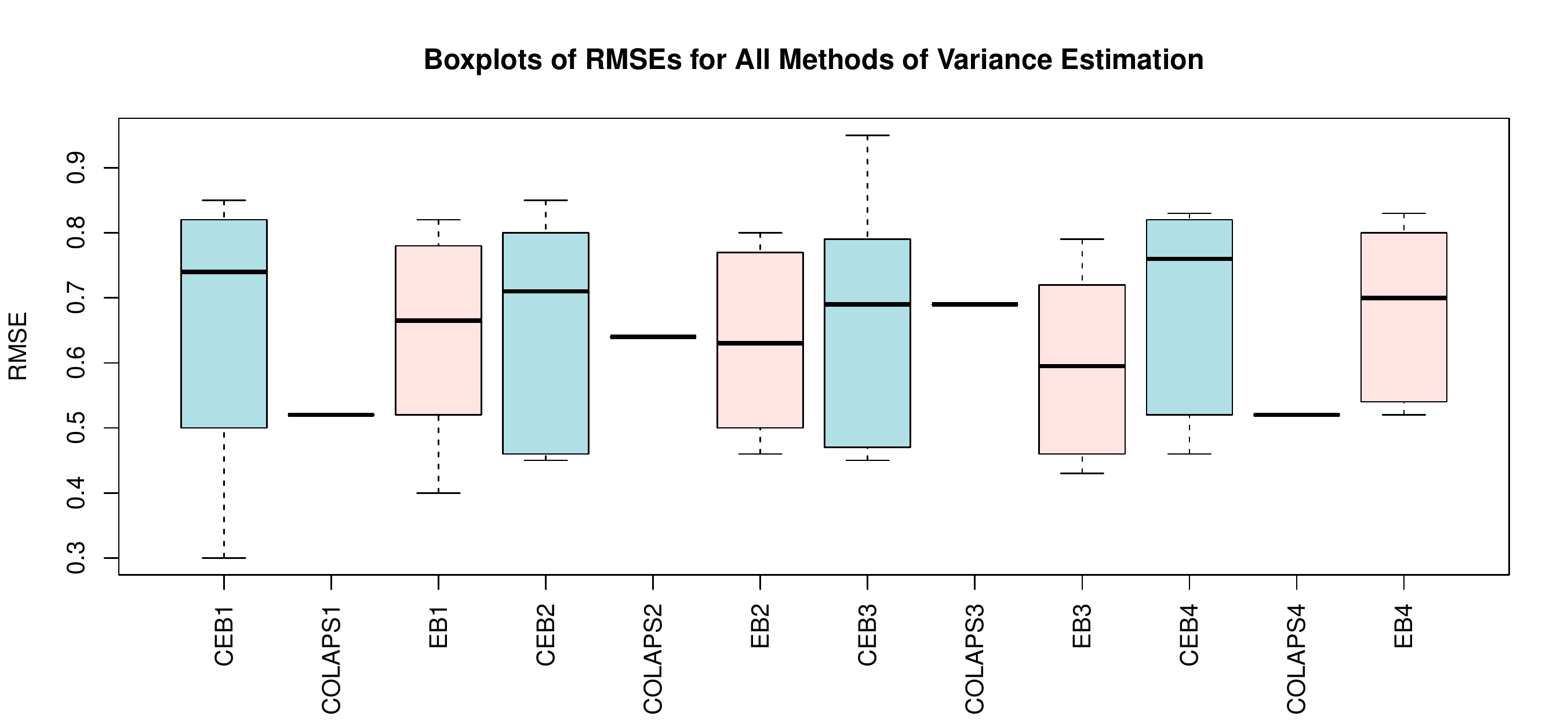}
\end{tabular}
\caption{Comparison Results of the Candidate Variances Based on the Empirical Relative MSE.}
\label{figure1}
\end{figure}

\section{Concluding Remarks} \label{remarks}

One PSU per stratum design has the advantage of deep stratification, which is efficient for estimating the finite population parameter of interest, but it is not possible to estimate the variance without making any implicit or explicit assumptions.
The collapsed stratum variance estimator, a classical method for estimating the variance of the design, usually suffers from the overestimation. 

In this paper, for obtaining the exact MSE expression (except for the FPC) for the collapsed variance estimator, we assum a single element has been selected using a simple random sampling within each stratum.  This assumption was made for the simplicity in exposition.  The MSE for collapsed variance estimator to the case when the single PSU is selected using a general design can be extended in a straightforward way.  The empirical Bayes and constrained empirical Bayes approaches developed in the paper are found to be promising alternatives to the traditional collapsed strata variance estimator for the one PSU per stratum design.  The estimation of the prior parameter $a$ in the EB and CEB approaches is a challenging problem.  In order to ensure that the estimator of the prior parameter $a$  is always within the admissible range, we proposed certain truncation strategies.  In the future, we plan to explore a hierarchical Bayesian approach in an effort to rectify the problem.

\newpage
\clearpage

\bibliographystyle{siam}
\bibliography{Bibliography}

\section{Appendix A: Proofs} \label{appendix}

All of the proofs in this section are based on the design-based application without consideration of any models.\\

\noindent \textbf{Expression \ref{eq:3.3}:}
\begin{proof}
\begin{align*}
Bias(v(\bar{y}_{st})) & =E(\frac{1}{2H^2}\sum_{g=1}^{H}s_g^2)-\frac{1}{4H^2}\sum_{h=1}^{2H}S_h^2
=\frac{1}{2H^2}\sum_{g=1}^{H}E(s^2_g)-\frac{1}{4H^2}\sum_{h=1}^{2H}S_h^2
\\
& =\frac{1}{2H^2}\sum_{g=1}^{H}E(\sum_{i=1}^{2}(y_{gi}-\bar{y}_{g})^2)-\frac{1}{4H^2}\sum_{h=1}^{2H}S_h^2 \\
 & =\frac{1}{2H^2}\sum_{g=1}^{H}E\{\frac{1}{2}(y_{g1}-y_{g2})^2\}-\frac{1}{4H^2}\sum_{h=1}^{2H}S_h^2 \\
& =\frac{1}{4H^2}\sum_{g=1}^{H}E\{(y_{g1}-\bar{Y}_{g1})-(y_{g2}-\bar{Y}_{g2})+(\bar{Y}_{g1}-\bar{Y}_{g2})\}^2 \\
& \quad-\frac{1}{4H^2}\sum_{h=1}^{2H}S_h^2=\frac{1}{4H^2}\sum_{g=1}^{H}E\{(y_{g1}-\bar{Y}_{g1})^2+(y_{g2}-\bar{Y}_{g2})^2 \\
& \quad +(\bar{Y}_{g1}-\bar{Y}_{g2})^2-2(y_{g1}-\bar{Y}_{g1})(y_{g2}-\bar{Y}_{g2})+2(y_{g1}-\bar{Y}_{g1}) \\
& \quad \times(\bar{Y}_{g1}-\bar{Y}_{g2})-2(y_{g2}-\bar{Y}_{g2})(\bar{Y}_{g1}-\bar{Y}_{g2})\}-\frac{1}{4H^2}\sum_{h=1}^{2H}S_h^2. 
\end{align*}

\noindent Under the stratified simple random sampling without replacement design, samples per stratum are selected independently; thus, $y_{g1}$ and $y_{g2}$ are independent. Also, $\bar{Y}_{g1}$ and $\bar{Y}_{g2}$, the population means in each collapsed stratum of group $g$ are fixed. As a result $E(y_{g1}-\bar{Y}_{g1})(y_{g2}-\bar{Y}_{g2})$ equals to $0$. Furthermore, $E(y_{g1})=\bar{Y}_{g1}$ and $E(y_{g2})=\bar{Y}_{g2}$, so we can rewrite $E(s_g^2)$ as follows:
\begin{equation*}
E(s_g^2) = \frac{1}{2}{E\{(y_{g1}-\bar{Y}_{g1})^2
+(y_{g2}-\bar{Y}_{g2})^2+(\bar{Y}_{g1}-\bar{Y}_{g2})^2\}}. 
\end{equation*}          
\noindent Thus 
\begin{align*}
Bias(v(\bar{y}_{st})) & =\frac{1}{4H^2}\sum_{g=1}^{H}
E\{(y_{g1}-\bar{Y}_{g1})^2+(y_{g2}-\bar{Y}_{g2})^2+
(\bar{Y}_{g1} - \bar{Y}_{g2})^2\} \\
& \quad-\frac{1}{4H^2}\sum_{h=1}^{2H}S_h^2=\frac{1}{4H^2}\sum_{g=1}^{H}
\{S^2_{g1}+S^2_{g2}+(\bar{Y}_{g1}-\bar{Y}_{g2})^2\}-\frac{1}{4H^2}\sum_{h=1}^{2H}S_h^2. 
\end{align*}                
Additionally, 
\begin{equation*}
E(y_{g1}-\bar{Y}_{g1})^2=\sum_{j=1}^{N}
(Y_{g1j}-\bar{Y}_{g1})^2/N=(1-N^{-1})S^2_{g1} \approx S^2_{g1}, 
\end{equation*}
\begin{equation*}
E(y_{g2}-\bar{Y}_{g2})^2=\sum_{j=1}^{N}
(Y_{g2j}-\bar{Y}_{g2})^2/N=(1-N^{-1})S^2_{g2}\approx S^2_{g2}. 
\end{equation*}

\noindent Therefore as $\sum_{g=1}^{H}(S^2_{g1}
+S^2_{g2})=\sum_{h=1}^{2H}S^2_h$, the bias is:
$$
Bias(v(\bar{y}_{st}))=\frac{1}{4H^2}\sum_{g=1}^{H}\{(\bar{Y}_{g1}-\bar{Y}_{g2})^2\}.
$$

\end{proof}

\noindent \textbf{Expression \ref{eq:3.4}:}
\begin{proof}
\begin{align*}
Var(v(\bar{y}_{st})) & =Var(\frac{1}{2H^2}\sum_{g=1}^{H}s^2_g)=\frac{1}{4H^4}\sum_{g=1}^{H}Var(s^2_g)=\frac{1}{4H^4}\sum_{g=1}^{H}Var(\sum_{i=1}^{2}(y_{gi}-\bar{y}_{g})^2) \\
 & =\frac{1}{4H^4}\sum_{g=1}^{H}Var(\frac{1}{2}(y_{g1}-y_{g2})^2) \\
 & =\frac{1}{16H^4}\sum_{g=1}^{H}\{E(y_{g1}-y_{g2})^4
-\{E(y_{g1}-y_{g2})^2\}^2\}, 
\end{align*}

\noindent where
\begin{align*}
E(y_{g1}-y_{g2})^4 & =E\{(y_{g1}-\bar{Y}_{g1})^4+(y_{g2}-\bar{Y}_{g2})^4
+(\bar{Y}_{g1}-\bar{Y}_{g2})^4 \\
& \quad +6(y_{g1}-\bar{Y}_{g1})^2(y_{g2}-\bar{Y}_{g2})^2
+6(y_{g1}-\bar{Y}_{g1})^2(\bar{Y}_{g1}-\bar{Y}_{g2})^2 \\
& \quad +6(y_{g2}-\bar{Y}_{g2})^2(\bar{Y}_{g1}-\bar{Y}_{g2})^2
+4(y_{g1}-\bar{Y}_{g1})^3(\bar{Y}_{g1}-\bar{Y}_{g2}) \\
& \quad-4(y_{g2}-\bar{Y}_{g2})^3(\bar{Y}_{g1}-\bar{Y}_{g2})\}=\mu_{4,g1}+\mu_{4,g2}+(\bar{Y}_{g1} - \bar{Y}_{g2})^4 \\
 & \quad+6S^2_{g1}S^2_{g2}+6S^2_{g1}(\bar{Y}_{g1}-\bar{Y}_{g2})^2+6S^2_{g2}(\bar{Y}_{g1}-\bar{Y}_{g2})^2 \\
& \quad+4\mu_{3,g1}(\bar{Y}_{g1}-\bar{Y}_{g2})-4\mu_{3,g2}(\bar{Y}_{g1}-\bar{Y}_{g2}), 
\end{align*}
and $\{E(y_{g1}-y_{g2})^2\}^2=\{S^2_{g1}+S^2_{g2}+(\bar{Y}_{g1}-\bar{Y}_{g2})^2\}^2.$ Therefore;
\begin{align*}
Var(v(\bar{y}_{st})) & =\frac{1}{16H^4}\sum_{g=1}^{H}
\{\mu_{4,g1}+\mu_{4,g2}+(\bar{Y}_{g1}-\bar{Y}_{g2})^4+
6S^2_{g1}S^2_{g2} \\
& \quad+6S^2_{g1}(\bar{Y}_{g1}-\bar{Y}_{g2})^2+6S^2_{g2}(\bar{Y}_{g1}-\bar{Y}_{g2})^2 \\
& \quad+4\mu_{3,g1}(\bar{Y}_{g1}-\bar{Y}_{g2})-4\mu_{3,g2}(\bar{Y}_{g1}-\bar{Y}_{g2}) \\
& \quad -\{S^2_{g1}+S^2_{g2}+(\bar{Y}_{g1}-\bar{Y}_{g2})^2\}^2\}. 
\end{align*}
As a result,
\begin{align*}
Var(v(\bar{y}_{st}))&=\frac{1}{16H^4}\sum_{g=1}^{H}\{\mu_{4,g1}
+\mu_{4,g2}+2S^2_{g1}S^2_{g2} \\
 & \quad+4(\bar{Y}_{g1}-\bar{Y}_{g2})^2(S^2_{g1}+S^2_{g2}) \\
& \quad-(S^2_{g1}-S^2_{g2})^2+4(\bar{Y}_{g1}-\bar{Y}_{g2})(\mu_{3,g1}-\mu_{3,g2})\}
\end{align*}
since
\begin{align*}
\mu_{3,g1}=E(y_{g1}-\bar{Y}_{g1})^3\quad\quad ,\quad\quad
\mu_{3,g2}=E(y_{g2}-\bar{Y}_{g2})^3 
\end{align*}
\begin{align*}
\mu_{4,g1}=E(y_{g1}-\bar{Y}_{g1})^4\quad\quad ,\quad\quad
\mu_{4,g2}=E(y_{g2}-\bar{Y}_{g2})^4. 
\end{align*}
\end{proof}

\vspace{0.5cm}

\noindent \textbf{Note:} This paper was published as a \textit{Proceeding in the Survey Research Methods Section, JSM 2015, American Statistical Association}.
\end{document}